\DeclareMathSymbol{\subsetneq}{\mathrel}{AMSb}{"28}
\DeclareMathSymbol{\rightrightarrows}{\mathrel}{AMSa}{"13}
\newcommand{\BZ}{\mathbb{Z}} 
\newcommand{\LSeq}{\sim_c} 
\newcommand{\derG}[1]{\stackrel{#1}{\Rightarrow}} 
\newcommand{\G}{{\mathcal G}}
\theoremstyle{plain}
\newtheorem{theorem}{Theorem}[section]
\newtheorem{lemma}[theorem]{Lemma}
\theoremstyle{remark}
\theoremstyle{definition}
\begin{document}

\title{Nonterminal complexity of some families of infinite regular languages}
\author{Dmitry Golubenko}

\address{Faculty of Mathematics, Higher School of Economics, 6~Usacheva str., Moscow, Russia}
\email{golubenko@mccme.ru}


\begin{abstract}
Nonterminal complexity of a context-free language is the smallest possible number of nonterminals in its generating grammar. While in general case nonterminal complexity computation problem is unsolvable, it can be computed for different families of regular languages. In this paper we study nonterminal complexity of some families of infinite regular languages. 
\end{abstract}

\maketitle

\tableofcontents

\section{Introduction}

For a context-free language $L$ we define its nonterminal complexity $Var(L)$ as the smallest posible number of nonterminals in a context-free grammar generating $L$:
\begin{align*}
Var(L) = \min \left\{ |N| \colon L \text{ is generated by } (N, \Sigma, P, S) \right\}
\end{align*}
It seems that Gruska was first to study nonerminal complexity. In \cite{grushka} he proved that for every $n$ there exists regular language $L_n$ over alphabet $\{a, b\}$ such that $Var(L_n) = n$, namely for $n \geqslant 2$
\begin{align*}
Var\left((ab)^* + \ldots + (ab^n)^*\right) = n + 1
\end{align*}
Gruska also proved that over one-letter alphabet every context-free language has nonterminal complexity at most 2 and
\begin{align*}
Var(a^2 + (a^3)^*) = 2
\end{align*}
In \cite{dassow} Dassow and Stiebe studied behavior of nonterminal complexity w. r. t. language operations such as union, conctatenation, Kleene star and homomorphisms. For example, if $Var(L_1) = n_1$ and $Var(L_2) = n_2$, then one easily constructs context-free grammar with $n_1 + n_2 + 1$ nonterminals which generates $L_1 \cup L_2$; thus $Var(L_1 \cup L_2) \leqslant n_1 + n_2 + 1$. Dassow and Stiebe show that this naive estimation is exact by constructing languages $L_{m,n}^{(k)}$ and $K_{m,n}^{(k)}$ such that
\begin{align*}
Var(L_{m,n}^{(k)}) = m, \quad Var(K_{m,n}^{(k)}) = n, \quad Var(L_{m,n}^{(k)} \cup K_{m,n}^{(k)}) = k
\end{align*}
for any $k \leqslant m + n + 1$. The key arguement is the following statement.
\begin{lemma}[\cite{dassow}]
	For pairwise different integers $k_1, k_2 \ldots k_{2n}$ we have
	\begin{align*}
	Var\left( (ab^{k_1})^* (ab^{k_2})^* \ldots (ab^{k_{2n}})^* \right) = n
	\end{align*}
\end{lemma}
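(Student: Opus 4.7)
The proof splits into two independent parts: $Var(L) \leqslant n$ and $Var(L) \geqslant n$, where $L = (ab^{k_1})^* (ab^{k_2})^* \cdots (ab^{k_{2n}})^*$.

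For the upper bound, I plan to exhibit an explicit nested grammar with nonterminals $A_1, \ldots, A_n$ and productions
\begin{align*}
A_i &\to ab^{k_i} A_i \mid A_i \, ab^{k_{2n+1-i}} \mid A_{i+1} \quad (1 \leqslant i \leqslant n-1), \\
A_n &\to ab^{k_n} A_n \mid A_n \, ab^{k_{n+1}} \mid \varepsilon.
\end{align*}
The key observation is that a single nonterminal with rules of the shape $A \to xA \mid Ay \mid X$ generates precisely $x^* \, L(X) \, y^*$, since left- and right-expansions of $A$ commute freely within any derivation. A straightforward induction on $n-i$ then yields $L(A_i) = (ab^{k_i})^* (ab^{k_{i+1}})^* \cdots (ab^{k_{2n+1-i}})^*$, and taking $i=1$ gives $L(A_1) = L$, so $n$ nonterminals suffice.

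For the lower bound, let $G$ be any CFG generating $L$ with set of nonterminals $N$, and let me argue via a pumping-style analysis. For each $i \in \{1, \ldots, 2n\}$, the word $(ab^{k_i})^m$ lies in $L$ for every $m$, and for $m$ large the context-free pumping lemma applied to a parse tree of $(ab^{k_i})^m$ yields a non-trivial self-embedding $A \Rightarrow^+ u A v$ at some nonterminal $A$, with $uv$ a substring of $(ab^{k_i})^m$ of positive length. A block-decomposition analysis — using the distinctness of the $k_j$'s, which forces every word of $L$ to admit a unique parse into blocks of the form $ab^{k_j}$ — shows that iterating this self-embedding can only increase the number of $ab^{k_i}$-blocks in the derived word (even if $u$ or $v$ wraps around partial $b$-runs, such partial runs compose back to whole $ab^{k_i}$ blocks on concatenation). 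In this sense every index $i$ is a \emph{pump index} of some nonterminal, on its left or right side.

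I would then argue that each nonterminal $A$ has at most one left pump index and at most one right pump index. Suppose $A \Rightarrow^+ u_1 A w_1$ and $A \Rightarrow^+ u_2 A w_2$ are two non-trivial left-embeddings with distinct pump indices $i_1 \neq i_2$. Composing them in the two possible orders gives derivations $A \Rightarrow^+ u_1 u_2 A w_2 w_1$ and $A \Rightarrow^+ u_2 u_1 A w_1 w_2$, and embedding either one into any complete derivation $S \Rightarrow^* \alpha A \beta \Rightarrow^* \alpha z \beta \in L$ produces a word of $L$ containing the substring $u_1 u_2$ or $u_2 u_1$. One of those substrings has an $ab^{k_{i_1}}$-block preceding an $ab^{k_{i_2}}$-block while the other has the reverse, contradicting the fixed left-to-right factor order in $L$. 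Hence each nonterminal covers at most two of the $2n$ indices, and combined with the previous paragraph this forces $|N| \geqslant n$.

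The main obstacle is making the block-decomposition analysis in the lower bound fully rigorous: one has to rule out exotic pumpable pairs $(u,v)$ that do not obviously look like powers of a single block $ab^{k_i}$ but nonetheless preserve $L$'s structure under iteration. The distinctness of the $k_j$'s, and the resulting uniqueness of block parsings in $L$, is exactly what pins down the admissible shape of $u$ and $v$ and drives the $2n/2 = n$ counting bound.
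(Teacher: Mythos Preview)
The paper does not supply its own proof of this lemma; it is quoted from Dassow and Stiebe \cite{dassow} as background for the discussion of nonterminal complexity under operations, so there is nothing in the present paper to compare your argument against.

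Assessing your proposal on its own merits: the upper bound is correct, and the nested grammar you write down (each $A_i$ wrapping $(ab^{k_i})^*$ on the left and $(ab^{k_{2n+1-i}})^*$ on the right around $A_{i+1}$) is precisely the construction in \cite{dassow}. For the lower bound, your strategy of assigning to every index $i\in\{1,\dots,2n\}$ a pair $(A,\text{side})\in N\times\{L,R\}$ via a pumping self-embedding, and then arguing that no pair can serve two distinct indices, is also the route taken there. The obstacle you flag is genuine but not deep: once you observe that a nonempty $u$ obtained from pumping inside $(ab^{k_i})^m$ must contain at least one $a$ (otherwise iterating $u$ creates an unbounded $b$-run, which cannot lie in $L$), the shape $u=b^{p}(ab^{k_i})^{r}ab^{q}$ is forced, and the constraint that $x u^t z v^t y\in L$ for all $t$ together with the fixed block order in $L$ pins down $p+q=k_i$. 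With that in hand, your two-orders composition argument (examining $u_1u_2$ versus $u_2u_1$) produces an honest $\dots ab^{k_{i_1}}\dots ab^{k_{i_2}}\dots$ factor in one order and the reverse in the other, yielding the contradiction. So the sketch is sound; it just wants those two sentences of block bookkeeping spelled out.
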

Dassow and Stiebe also prove similar results for $Var(L_1 \cup L_2)$, $Var(L_1 L_2)$, $Var(L_1^*)$ and $Var(h(L_1))$. 

Nonterminal complexity of finite languages was developed extensively because of its connections with grammar compression and proof theory. Indeed, in \cite{hetzl} it was shown that cut-eliminaion process for a certain class of proofs corresponds to computation of the language of a tree grammar. Thus compression of finite language with the smallest possible grammar is used to produce shorter proofs. See \cite{wolf} for details.

In this paper we study nonterminal complexity of regular languages. Starting from a toy example discussed in section 2, we estimate $Var(w_1^* + \ldots + w_n^*)$ for different words $w_1, \ldots w_n$.

\section{Toy example}

This example was first given as a question for university course final exam.

\begin{lemma}
	There exist no CFG with one nonterminal which generates $L = (ab)^* + (ba)^*$.
\end{lemma}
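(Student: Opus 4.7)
The plan is to assume by contradiction that a CFG $G$ with a single nonterminal $S$ generates $L = (ab)^* + (ba)^*$, and to show that no production of $G$ can contain $S$ on its right-hand side in a useful way. Since $L$ is infinite, this yields the contradiction. The argument hinges on one substitution lemma followed by a short classification of the allowed right-hand sides.

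The key substitution lemma states: if $A, B \in \{a, b\}^*$ satisfy $A x B \in L$ for every $x \in L$, then $A = B = \varepsilon$. To prove it, I substitute $x = (ab)^k$ and $x = (ba)^k$ for large $k$ and observe that each of $A (ab)^k B$ and $A (ba)^k B$ must lie in $(ab)^* \cup (ba)^*$. Comparing the resulting parity conditions on $|A|$ together with the two required forms of $A$ as a prefix of the two different alternating patterns yields: if $|A|$ is even, then $A$ must equal both $(ab)^{|A|/2}$ and $(ba)^{|A|/2}$, forcing $|A| = 0$; while $|A|$ odd forces $A$ to begin simultaneously with the letter $a$ and the letter $b$, which is impossible. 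Hence $A = \varepsilon$, and the symmetric argument on suffixes gives $B = \varepsilon$.

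With this lemma in hand, take any production $S \to u_0 S u_1 S \ldots S u_m$ with $m \geq 1$ occurrences of $S$. Because $\varepsilon \in L = L(S)$ and $(ab)^k, (ba)^k \in L$, arbitrary words of $L$ can be substituted for each $S$; in particular, substituting $x \in L$ for the first $S$ and $\varepsilon$ for each of the remaining ones yields $u_0 x u_1 u_2 \ldots u_m \in L$ for every $x \in L$. The substitution lemma applied with $A = u_0$ and $B = u_1 u_2 \ldots u_m$ forces $u_0 = \varepsilon$ and $u_1 = u_2 = \ldots = u_m = \varepsilon$. Thus every production whose right-hand side contains $S$ has the form $S \to S^m$. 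The case $m = 1$ is vacuous, and for $m \geq 2$, deriving $S \Rightarrow S^m \Rightarrow^* w_1 w_2$ (by substituting arbitrary $w_1, w_2 \in L$ for the first two copies of $S$ and $\varepsilon$ for the rest) shows $L \cdot L \subseteq L$, which fails because $ab \cdot ba = abba \notin L$. Hence $G$ contains no useful productions with $S$ on the right-hand side, so $L(G)$ is finite, contradicting the infiniteness of $L$.

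The main obstacle is the substitution lemma itself: tracking how the long inserted block $(ab)^k$ or $(ba)^k$ must simultaneously align with $(ab)^*$ and $(ba)^*$ at its junctions with $A$ and $B$ requires a careful parity case analysis, but nothing else clever is needed.
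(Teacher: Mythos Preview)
Your proof is correct, and the overall strategy matches the paper's: show that in any production $S \to \alpha$ the occurrences of $S$ cannot have nontrivial surroundings, conclude that the only productions with $S$ on the right are of the form $S \to S^m$, rule those out, and deduce that $L(G)$ is finite.

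The difference is in how this reduction is carried out. The paper uses a three-line local neighbour check: if some $S$ in $\alpha$ is immediately followed by $a$, substitute $ba$ for that $S$ to produce two consecutive $a$'s; symmetrically for a following $b$ or a preceding $a$ or $b$; and if $SS$ occurs, substitute $ab$ for the first and $ba$ for the second. Your substitution lemma reaches the same conclusion by inserting long blocks $(ab)^k$ and $(ba)^k$ and doing a parity case analysis on $|A|$. This is perfectly valid, and the lemma is a clean reusable statement, but it is heavier machinery than the problem needs: the paper's neighbour argument pins down everything without ever tracking prefixes of long alternating words.
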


\begin{proof}
	Suppose that there exists CFG $\G = (\{S\}, \{a, b\}, P, S)$ such that $L = L(\G)$. Every production is $S \rightarrow \alpha$ for some $\alpha \in \{a, b, S\}^*$, say, $\alpha = w_1 S w_2 S \ldots w_k S w_{k+1}$. Then $$u_1, \ldots u_k \in L \Rightarrow w_1 u_1 w_2 u_2 \ldots w_k u_k w_{k+1} \in L,$$ because as $L$ is generated by $\G$, one can derive this word by using $S \rightarrow w_1 S w_2 S \ldots w_k S w_{k+1}$ and then deriving every $u_i$ from nonterminals. Them $S$ can't be followed immediately by any symbol in right-hand side of every production:
	
	\begin{itemize}
		\item if $S$ is followed by $a$ (or $b$), we can derive $ba$ (or $ab$) from $S$, thus we can derive the word which contains two $a$'s (or two $b$'s) in the row, which doesn't lie in $L$;
		\item if $S$ follows $a$ (or $b$), we can derive $ab$ (or $ba$) from $S$, thus we can derive the word which contains two $a$'s (or two $b$'s) in the row, which doesn't lie in $L$;
		\item if we have subword $SS$ of $\alpha$, we can derive $ab$ from the first $S$ and $ba$ from the second.
	\end{itemize}
	
	So $\G$ can only have productions of sort $S \rightarrow S$ and $S \rightarrow w$ for $w \in L$. Thus $L(\G)$ is finite, which is contradiction.
\end{proof}

However, there exists CFG with two variables which generates $L$. Indeed, we may just consider the following CFG $$ S \rightarrow bAa \mid A , \, A \rightarrow abA \mid \varepsilon.$$

\section{Languages of kind $w_1^* + \ldots + w_n^*$}

Now let's consider language of kind $w_1^* + \ldots + w_n^*$ where $w_1, \ldots w_n$ are different nonempty words over an arbitrary alphabet.

For every nonempty word $u$ we define $\widehat{u}$ as a primitive word such that $u = \widehat{u}^k$ for some natural $k$. This word $\widehat{u}$ exists and is unique (see \cite{shallit}).

Thus we may introduce equivalence relation: $x \LSeq y$ iff $\widehat{x} = zw$ and $\widehat{y} = wz$ for some words $w$ and $z$. It's easy to check that $\LSeq$ is indeed equivalence relation since words $zw$ and $wz$ are neither or both primitive and $x \LSeq y$ iff $|\widehat{x}| = |\widehat{y}| = m$ and $\widehat{y}[i] = \widehat{x}[(k+i) \mod m]$ for every $i$ and some $k$ from $\{0, 1, \ldots m-1\}$.

We'll also suppose by default (without loss of generality) that there is no rules of kind $A \rightarrow A$, no empty nonterminal (such that there is no word derived from it) and no useless nonterminals (i. e. every nonterminal appears in derivation of some word).

For every CFG $\G = (N, \Sigma, P, S)$ one can construct a digraph $D(\G) = (N, E)$ where $N$ is the set of vertices and 
\begin{align*}
E = \{ (A, B) \colon A \rightarrow \alpha B \beta \in P  \text{ for some } \alpha, \beta \in (N \cup \Sigma)^* \}
\end{align*}


\begin{lemma}
	\label{Scycle}
	Suppose that $\G$ is CFG such that $L(\G) = w_1^* + \ldots + w_n^*$ and there exists nontrivial path from $S$ to $S$ in $D(\G)$. Then words $w_1, \ldots w_n$ commute pairwise.
\end{lemma}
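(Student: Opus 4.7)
I plan to extract pumping from the $S$-cycle and then use word-combinatorics on primitive roots to show that every $w_i$ is a power of one common primitive word $\widehat{w_j}$. Concretely, from the nontrivial path $S = B_0 \to \cdots \to B_k = S$ in $D(\G)$ with $k \geq 1$ I would concatenate the corresponding productions to obtain a derivation $S \Rightarrow^+ \alpha S \beta$ in $\G$, and then replace each nonterminal occurring in $\alpha, \beta$ by some terminal word it derives (using that no nonterminal is empty), yielding $S \Rightarrow^+ \gamma S \delta$ with $\gamma, \delta \in \Sigma^*$. Iterating this derivation and composing with any derivation $S \Rightarrow^* v$ shows that $\gamma^k v \delta^k \in L$ for every $k \geq 0$ and every $v \in L = L(\G)$.

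Taking $v = \varepsilon \in L$, I obtain $\gamma^k \delta^k \in L$ for all $k \geq 0$. Since $L$ is a union of $n$ sets $w_j^*$, the pigeonhole principle gives an index $j$ and an infinite set $K \subseteq \BN$ with $\gamma^k \delta^k = w_j^{m_k}$ for $k \in K$. For such $k$ the word $\gamma^k$ is a prefix of $\widehat{w_j}^N$ of unbounded length as $k \to \infty$, so the one-sided infinite word $\gamma^\omega$ coincides with $\widehat{w_j}^\omega$; by primitivity of $\widehat{w_j}$ this forces $\gamma = \widehat{w_j}^a$ for some $a \geq 0$, and a symmetric suffix argument gives $\delta = \widehat{w_j}^b$.

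For each $i$, I would now set $v = w_i$ and apply the pigeonhole principle again to obtain an index $s(i)$ and infinitely many $k$ with $\gamma^k w_i \delta^k = w_{s(i)}^{n_k}$. Assuming WLOG $a \geq 1$ (the case $b \geq 1$ is symmetric), the prefix $\widehat{w_j}^{ak}$ of $w_{s(i)}^{n_k}$ is arbitrarily long, and the same primitive-root argument forces $\widehat{w_{s(i)}} = \widehat{w_j}$; cancellation in the free monoid then gives $w_i \in \widehat{w_j}^*$. Therefore every $w_i$ is a power of the single primitive word $\widehat{w_j}$, so $w_1, \ldots, w_n$ commute pairwise.

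The main technical step is the word-combinatorial deduction that $\gamma^\omega = \widehat{w_j}^\omega$ implies $\gamma \in \widehat{w_j}^*$, which rests on primitivity of $\widehat{w_j}$ and amounts to an application of Fine and Wilf. A secondary concern is the degenerate case $\gamma = \delta = \varepsilon$, occurring when every production in the $S$-cycle is a unit production $A \to B$ and the pumping is vacuous; here one reduces to the main case by identifying the nonterminals $B_0, \ldots, B_{k-1}$ (which all share the same language), so that any remaining nontrivial $S$-cycle necessarily involves a production that introduces at least one terminal or an out-of-cycle nonterminal.
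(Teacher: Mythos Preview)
Your argument is correct but follows a different route from the paper's. After obtaining $S \Rightarrow^* xSy$ with $xy \neq \varepsilon$, the paper avoids pigeonhole and infinite words entirely: it pumps exactly $M = \prod_i |w_i|$ times and inserts $w_i^{M/|w_i|}$, so that $x^M\, w_i^{M/|w_i|}\, y^M \in w_j^*$ for some $j$. Because $|w_j|$ divides $M$, each of the three blocks $x^M$, $w_i^{M/|w_i|}$, $y^M$ has length a multiple of $|w_j|$ and is therefore itself a power of $w_j$; a single application of Lyndon--Sch\"utzenberger then shows that $x$, $y$, and $w_i$ share a common primitive root, which is independent of $i$ since $x$ or $y$ is fixed and nonempty. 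This length-alignment trick is shorter and more direct than your two rounds of pigeonhole plus the Fine--Wilf step on $\gamma^\omega = \widehat{w_j}^\omega$; on the other hand, your approach is the more generic combinatorics-on-words template and does not depend on knowing the lengths $|w_i|$ in advance. Concerning the degenerate case $\gamma = \delta = \varepsilon$: the paper simply asserts $xy \neq \varepsilon$ (invoking infiniteness of $L$) without further detail, so it is not more careful than you are here; your idea of collapsing a unit-production cycle is on the right track, but note that after the identification the new grammar need not satisfy the hypothesis on $D(\G)$, so you should argue instead that if every $S$-self-derivation produces only an $\varepsilon$-context then $L(\G)$ could not be infinite.
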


\begin{proof}
	Since $w_1^* + \ldots + w_n^*$ is infinite and $D(\G)$ contains nontrivial path from $S$ to $S$, there exist strings $\alpha, \beta \in (N \cup \Sigma)^*$ such that $S \derG{*} \alpha S \beta$ and $\alpha \beta \neq \epsilon$. Then by deriving some word from each nonterminal in $\alpha$ and $\beta$ we obtain $S \derG{*} x S y$ and $xy \neq \epsilon$.
	
	Let $M = \prod_{i = 1}^{n} |w_i|$. For any $i \in \{1, \ldots n\}$ there exists $j$ such that
	\begin{align*}
	S \derG{*} x^{M} S y^{M} \derG{*} x^{M} w_i^{\frac{M}{|w_i|}} y^{M} \in w_j^*
	\end{align*}
	As soon as $|x^M|$ and $|y^M|$ are divided by $|w_j|$ we hav $x^M = w_j^{b_x}$, $y^M = w_j^{b_y}$ and $w_i^{\frac{M}{|w_i|}} = w_j^{b_w}$ for some integers $b_x, b_y, b_w$. Thus $x, y, w_i \in p^*$ for some primitive word $p$. As $w_1, \ldots w_n \in p^*$ it follows that $w_1, \ldots w_n$ commute pairwise.
\end{proof}

\begin{theorem}
	The language $w_1^* + \ldots + w_n^*$ can be generated by CFG with one variable iff words $w_1, \ldots w_n$ commute pairwise.
\end{theorem}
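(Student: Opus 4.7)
The plan is to handle the two implications separately. For the forward direction the work is almost entirely done by Lemma~\ref{Scycle}. Since each $w_i$ is nonempty, $L = w_1^* + \cdots + w_n^*$ is infinite, so a one-nonterminal grammar $\G$ generating $L$ must contain some production $S \to \alpha S \beta$ with $\alpha, \beta \in (\{S\}\cup\Sigma)^*$; together with the standing assumption that no rule $S \to S$ is present, this forces $\alpha\beta \neq \varepsilon$, which is precisely a nontrivial walk from $S$ to itself in $D(\G)$. Lemma~\ref{Scycle} then immediately yields pairwise commutativity of $w_1, \ldots, w_n$.

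For the converse I would first reduce to arithmetic on a single primitive word. The classical combinatorics-on-words fact that pairwise commuting nonempty words are common powers of one primitive word $p$ (this also drops out of the proof of Lemma~\ref{Scycle}) lets me write $w_i = p^{k_i}$ with $k_i \geq 1$, so that $L = \{p^m : m \in k_1\BN \cup \cdots \cup k_n\BN\}$. Setting $K = \mathrm{lcm}(k_1, \ldots, k_n)$ and
\begin{align*}
R = \{\, r : 0 \leq r < K,\ k_i \mid r \text{ for some } i \,\},
\end{align*}
I propose the grammar with sole nonterminal $S$ and productions
\begin{align*}
S \to p^K S \quad \text{and} \quad S \to p^r \text{ for every } r \in R,
\end{align*}
so that $r = 0$ gives the terminal rule $S \to \varepsilon$.

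Verification is essentially book-keeping on exponents modulo $K$. Every derivable word has the shape $p^{Kn+r}$ with $n \geq 0$ and $r \in R$, and picking any $i$ with $k_i \mid r$ we get $k_i \mid Kn+r$, so the word lies in $w_i^* \subseteq L$. Conversely, any $p^m \in L$ with $k_j \mid m$ splits as $m = Kn + r$ with $0 \leq r < K$; since $k_j \mid K$, divisibility transfers to $k_j \mid r$, so $r \in R$ and $p^m$ is derivable. The one design choice that is the real obstacle here is the use of the common multiple $K$ in the recursion: the naive grammar with productions $S \to p^{k_i} S$ would permit free additive mixing of the periods and so generate spurious exponents in the numerical semigroup generated by the $k_i$ (which need not lie in any single $k_\ell\BN$), so it is essential that the only recursive step is by $p^K$ and that each terminal rule $S \to p^r$ is applied exactly once at the end of a derivation.
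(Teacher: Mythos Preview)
Your proof is correct and follows essentially the same strategy as the paper: the forward direction is Lemma~\ref{Scycle}, and the converse builds a one-nonterminal grammar over a common primitive root $p$ via a recursive rule $S \to p^{K}S$ together with terminal rules for the finitely many admissible residues. The only differences are cosmetic: you take $K = \mathrm{lcm}(k_1,\ldots,k_n)$ whereas the paper uses $\prod_i k_i$, and your description of the residue set $R$ and the verification that the grammar generates exactly $L$ are spelled out more explicitly.
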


\begin{proof}
	If $w_1^* + \ldots + w_n^* = L(\G)$ for CFG $\G = (\{S\}, \Sigma, P, S)$ then $\G$ satisfies conditions of lemma \ref{Scycle} and thus $w_1, \ldots w_n$ commute pairwise. 
	
	If $w_1, \ldots w_n$ commute pairwise then according to Lyndon-Sch\"{u}tzenberger there exists some word $w$ and integers $k_1, \ldots k_n$ such that $w_i = w^{k_i}$ for every $i \in \{1, \ldots n\}$. Let
	\begin{align*}
	R = \{ r \in \BZ_{\geqslant 0} \colon \exists u \in w_1^* + \ldots + w_n^* \quad |u| = r ~\mathrm{mod}~ \prod_{i = 1}^n k_i \}
	\end{align*}
	Then there exists grammar
	\begin{align*}
	S \rightarrow w^{\prod_{i = 1}^n k_i} S | w^{r_1} | \ldots | w^{r_m}, \text{  where  } R = \{r_1, \ldots r_m\}
	\end{align*}
	It's easy to check that this grammar generates $w_1^* + \ldots + w_n^*$.  
\end{proof}

Suppose now that there is no nontrivial path from $S$ to $S$ in $D(\G)$.

\begin{theorem}
	Let $k > 1$. Then $Var(w_1^* + \ldots + w_n^*) = k$ iff set $\{w_1, \ldots w_n\}$ is divided into $k-1$ $\LSeq$-equivalence classes.
\end{theorem}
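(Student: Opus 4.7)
I will establish matching upper and lower bounds. By Theorem~1, the regime $Var(L)=k>1$ is precisely the regime where $w_1,\ldots,w_n$ do not all commute pairwise, so by Lemma~\ref{Scycle} every CFG generating $L$ has no nontrivial $S$-to-$S$ path, that is, $S$ does not appear on any right-hand side.

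\smallskip\noindent\emph{Upper bound.} For each equivalence class $C_j$ I pick a primitive representative $p_j$, and let $T_j$ be any common multiple of the exponents $k_i$ arising in the decompositions $w_i = p_i^{k_i}$ where $p_i$ is a cyclic conjugate of $p_j$. Writing each $p_i = v_i u_i$ with $u_i v_i = p_j$, we have $w_i^s = v_i\. p_j^{sk_i-1}\. u_i$ for $s\ge 1$. With nonterminals $S, A_1, \ldots, A_{k-1}$ and the rules $A_j \to p_j^{T_j} A_j \mid \varepsilon$, $S\to\varepsilon$, and $S \to v_i\. p_j^{sk_i-1}\. A_j\. u_i$ for every $w_i \in C_j$ and every $s = 1,\ldots,T_j/k_i$, the $p_j$-exponents realised between $v_i$ and $u_i$ in derivations from $S$ range precisely over $\{jk_i-1 : j \ge 1\}$, giving $w_i^*\setminus\{\varepsilon\}$; summing over $w_i$ and $j$ and adjoining $\varepsilon$ recovers $L$ with $k$ nonterminals.

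\smallskip\noindent\emph{Lower bound.} For each pumpable $B \in N \setminus \{S\}$ (meaning $B \derG{*} y B z$ with $yz \neq \varepsilon$ and $B \derG{*} u_0$) define $C(B)$ to be the $\LSeq$-class of $\widehat y$. Well-definedness follows from pigeonhole and the Fine--Wilf periodicity theorem: plugging $y^l u_0 z^l$ into any $S$-derivation gives a word of some $w_{i(l)}^*$, and for the value $i^*$ occurring infinitely often the long run $y^l$ sits inside a string of primitive period $|\widehat{w_{i^*}}|$; Fine--Wilf applied to $y^l$ forces $|\widehat y| \mid |\widehat{w_{i^*}}|$, and primitivity of $\widehat{w_{i^*}}$ forces equality, so $\widehat y$ is a cyclic conjugate of $\widehat{w_{i^*}}$. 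A nested simultaneous pumping of two candidates $(y_1,z_1)$ and $(y_2,z_2)$ shows the associated class does not depend on the pumping chosen.

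\smallskip To show each class $C$ is hit by some $C(B)=C$, take infinitely many words of $L_C := \bigcup_{w \in C} w^*$ of unbounded length, pigeonhole a common $S$-rule $S \to \alpha = x_0 A_1 x_1 \cdots A_m x_m$ producing infinitely many of them, and pick an index $j^*$ with $L(A_{j^*})$ infinite. The CFG pumping lemma applied to a long $u_{j^*}^{(\ell_0)} = uvwxy$ in $L(A_{j^*})$ yields a pumpable $B \in N\setminus\{S\}$ with pair $(v,x)$. The pumped strings $s_n$ lie in $L$ while $s_1 \in L_C$; $s_1$ and $s_n$ agree on the prefix $Pv$ and the suffix $yQ$, differing only by extra interior copies $v^{n-1}, x^{n-1}$. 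Fine--Wilf on $s_n$ fixes its primitive period to $|\widehat v|$; the period-compatibility constraints this places on $w,x,y,Q$ in fact make $s_1$ also periodic with period $|\widehat v|$, and combining with the primitive period $m_C$ that $s_1$ inherits from $L_C$ and primitivity of $\widehat{w_{i''}}\in C$ gives $|\widehat v|=m_C$. Hence $\widehat v$ is cyclically conjugate to $\widehat{w_{i''}}$, so $C(B)=C$. The resulting injective map $C \mapsto B_C$ proves $|N \setminus \{S\}| \ge k-1$, i.e.\ $Var(L) \ge k$.

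\smallskip\noindent\emph{Main obstacle.} The delicate step is this transfer of periodicity from the pumped strings $s_n$ (where Fine--Wilf directly applies) back to the original string $s_1$ that lives in the targeted class $C$ and has no long pumped run of its own. I would handle it by expressing both $s_1$ and $s_n$ explicitly in terms of the shared pieces $P, v, w, x, y, Q$ and verifying that the period constraints $s_n$ imposes on these pieces are precisely what is needed to make $s_1$ periodic with the same period.
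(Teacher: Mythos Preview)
Your proposal is correct, and the upper-bound construction is essentially the paper's (one auxiliary nonterminal per $\LSeq$-class, built from a cyclic-conjugate decomposition of the primitive roots). The lower bound, however, follows a genuinely different route.

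The paper first argues that in an optimal grammar every nonterminal derives an infinite language (otherwise it can be eliminated by substitution), and then, for each $A\neq S$, fixes one long word $u\in L(A)$ with $|u|\ge 2^n M$ where $M=\prod_i|w_i|$. The core step is a short alignment trick: if $x_i u y_i\in w_i^*$ and $x_j u y_j\in w_j^*$ for two contexts of $A$, then inside $u$ one finds a block $w_i^{2M/|w_i|}$ beginning at a position with $|x_i z_i|$ divisible by $M$; re-reading the \emph{same} block inside $x_j u y_j$ at an $M$-aligned offset factors it as $p_j(s_jp_j)s_j$ with $s_jp_j=w_j^{M/|w_j|}$, whence $w_i\LSeq w_j$. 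No pumping lemma and no Fine--Wilf are used---just divisibility by the common multiple $M$. Your approach via pumpable nonterminals and Fine--Wilf is sound, and the step you flag as the main obstacle (transferring the period from $s_n$ to $s_1$) does go through, essentially because once $|\widehat v|=p$ is identified as the primitive period of $s_n$ one gets $p\mid |v|$ and $p\mid |x|$, so the pieces $P,v,w,x,yQ$ sit at the same residues modulo $p$ in $s_1$ as in every $s_n$; but this is noticeably heavier than the paper's two-line alignment. On the other hand, your surjectivity argument (every class $C$ is realised as $C(B)$ for some pumpable $B$) is made fully explicit, whereas the paper leaves the analogous coverage step (that every $w_i$ actually occurs as some $x\,u\,y$ for some $A\neq S$) implicit. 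One small edge case you should state: when the pumping pair has $y=\varepsilon$ (so $\widehat y$ is undefined), use $\widehat z$ instead---the same Fine--Wilf argument applied to $z^l$ handles it.
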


\begin{proof}
	Suppose that words $w_1, \ldots w_n$ are pairewise $\LSeq$-equivalent. We show how to construct CFG with two nonterminals generating $w_1^* + \ldots + w_n^*$. As $w_i \LSeq w_j$ for every $i$ and $j$ there exists primitive word $w$ such that $\widehat{w_i} \LSeq w$ for every $i$. Each word $w_i$ can be represented as $(s_i p_i)^{k_i}$ where $p_i s_i = w$. Let
	\begin{align*}
	R = \{ r \in \BZ_{\geqslant 0} \colon \exists u \in w_1^* + \ldots + w_n^* \quad |u| = r ~\mathrm{mod}~ \prod_{i = 1}^n k_i \}
	\end{align*}
	Then there exists grammar
	\begin{align*}
	S \rightarrow s_1 A p_1 | \ldots | s_n A p_n, \quad
	A \rightarrow w^{\prod_{i = 1}^n k_i} A | w^{r_1-1} | \ldots | w^{r_m-1}, \text{  where  } R = \{r_1, \ldots r_m\}
	\end{align*}
	Here $r_j - 1$ are considered as residues modulo $M$. It's easy to check that this grammar generates $w_1^* + \ldots + w_n^*$.
	
	If set $\{w_1, \ldots w_n\}$ is divided into $k-1$ $\LSeq$-equivalence classes we may use the construction above to find a CFG with $k$ nonterminals generating $w_1^* + \ldots + w_n^*$.
	
	Suppose now that $Var(w_1^* + \ldots + w_n^*) = k$. First we prove that in every grammar $\G$ with $k$ nonterminals generating language $w_1^* + \ldots + w_n^*$ each grammar $(N, \Sigma, P, A)$ generates infinite language. Suppose that there exists nonterminal $A$ such that $(N, \Sigma, P, A)$ is finite, then there exists some $B \in N$ such that if $B \rightarrow \gamma \in P$ then $\gamma \in \Sigma^*$ --- it may be $A$ or some nonterminal reachable from $A$ in $D(\G)$. Thus nonterminal $B$ may be omitted and its every occurence in right-hand sides of rules may be interchanged with every $\gamma$ such that $B \rightarrow \gamma \in P$. Indeed, if $B \rightarrow \gamma_1|\ldots|\gamma_m$ we may replace every rule of kind $A \rightarrow \beta_1 B \beta_2 B \ldots B \beta_{k+1}$ with the collection of rules of kind $A \rightarrow \beta_1 \gamma_{i_1} \beta_2 \gamma_{i_2} \ldots \gamma_{i_k} \beta_{k+1}$. Thus we obtain CFG with less than $k$ nonterminals generating $w_1^* + \ldots + w_n^*$ which is contradiction. 
	
	
	Let $A \neq S$ be a nonterminal. As soon as language $(N, \Sigma, P, A)$ is infinite, there exists such word $u \in \Sigma^*$ that $A \derG{*} u$ and $|u| \geqslant 2^n M$ where $M = \prod_{i = 1}^{n} |w_i|$. Suppose that
	\begin{align*}
	S \derG{*} \alpha_i A \beta_i \derG{*} x_i u y_i = w_i^{l_i}, \quad S \derG{*} \alpha_j A \beta_j \derG{*} x_j u y_j = w_j^{l_j}
	\end{align*}
	then there exist such words $z_i$ and $t_i$ that $x_i u y_i = x_i z_i w_i^{\frac{2M}{|w_i|}} t_i y_i$ and $|x_i z_i|$ is divided by $M$; then for some words $p_j$ and $s_j$ such that $w_i^{\frac{M}{|w_i|}} = p_j s_j$ and $|x_j z_i p_j|$ is divided by $M$ we have
	\begin{align*}
	x_j u y_j = x_j z_i w_i^{\frac{2M}{|w_i|}} t_i y_j = x_j z_i (p_j s_j)^2 t_i y_j = x_j z_i p_j (s_j p_j) s_j t_i y_j
	\end{align*}
	Thus $|s_j p_j| = M$ so $s_j p_j = w_j^{\frac{2M}{|w_j|}}$. This yields $w_i \LSeq w_j$.
	
	Thus $\{w_1, \ldots w_n\}$ is divided into at most $k-1$ $\LSeq$-equivalence classes. If the number of classes is less than $k-1$ then one can construct CFG with less than $k$ nonterminals which generates $w_1^* + \ldots + w_n^*$.
\end{proof}

\section{Acknowledgements}
The author would like to thank Alexey Kroshnin, Meruza K. and Igor Shimanogov for hospitality and useful discussions.

\hbadness=1100

\end{document}